\renewcommand\footnotetextcopyrightpermission[1]{}
  \providecommand\BibTeX{{%
    \normalfont B\kern-0.5em{\scshape i\kern-0.25em b}\kern-0.8em\TeX}}}
\begin{document}

\title{Data-driven Identification of Number of Unreported Cases for COVID-19: Bounds and Limitations}

\author{Ajitesh Srivastava}
\email{ajiteshs@usc.edu}
\author{Viktor K. Prasanna}
\email{prasanna@usc.edu}
\affiliation{%
  \institution{University of Southern California}
  \city{Los Angeles}
  \state{CA, USA}
}

\renewcommand{\shortauthors}{Srivastava and Prasanna}

\begin{abstract}
  Accurate forecasts for COVID-19 are necessary for better preparedness and resource management. Specifically, deciding the response over months or several months requires accurate long-term forecasts which is particularly challenging as the model errors accumulate with time. A critical factor that can hinder accurate long-term forecasts, is the number of unreported/asymptomatic cases. While there have been early serology tests to estimate this number, more tests need to be conducted for more reliable results. To identify the number of unreported/asymptomatic cases, we take an epidemiology data-driven approach. We show that we can identify lower bounds on this ratio or upper bound on actual cases as a factor of reported cases. To do so, we propose an extension of our prior heterogeneous infection rate model, incorporating unreported/asymptomatic cases. We prove that the number of unreported cases can be reliably estimated only from a certain time period of the epidemic data. In doing so, we construct an algorithm called Fixed Infection Rate method, which identifies a reliable bound on the learned ratio. We also propose two heuristics to learn this ratio and show their effectiveness on simulated data. We use our approaches to identify the upper bounds on the ratio of actual to reported cases for New York City and several US states.  Our results demonstrate with high confidence that the actual number of cases cannot be more than 35 times in New York, 40 times in Illinois, 38 times in Massachusetts and 29 times in New Jersey, than the reported cases.
\end{abstract}



\keywords{COVID-19, epidemiological modeling, unreported cases, model learning}

\begin{teaserfigure}
\centering
  \includegraphics[width=0.7\textwidth]{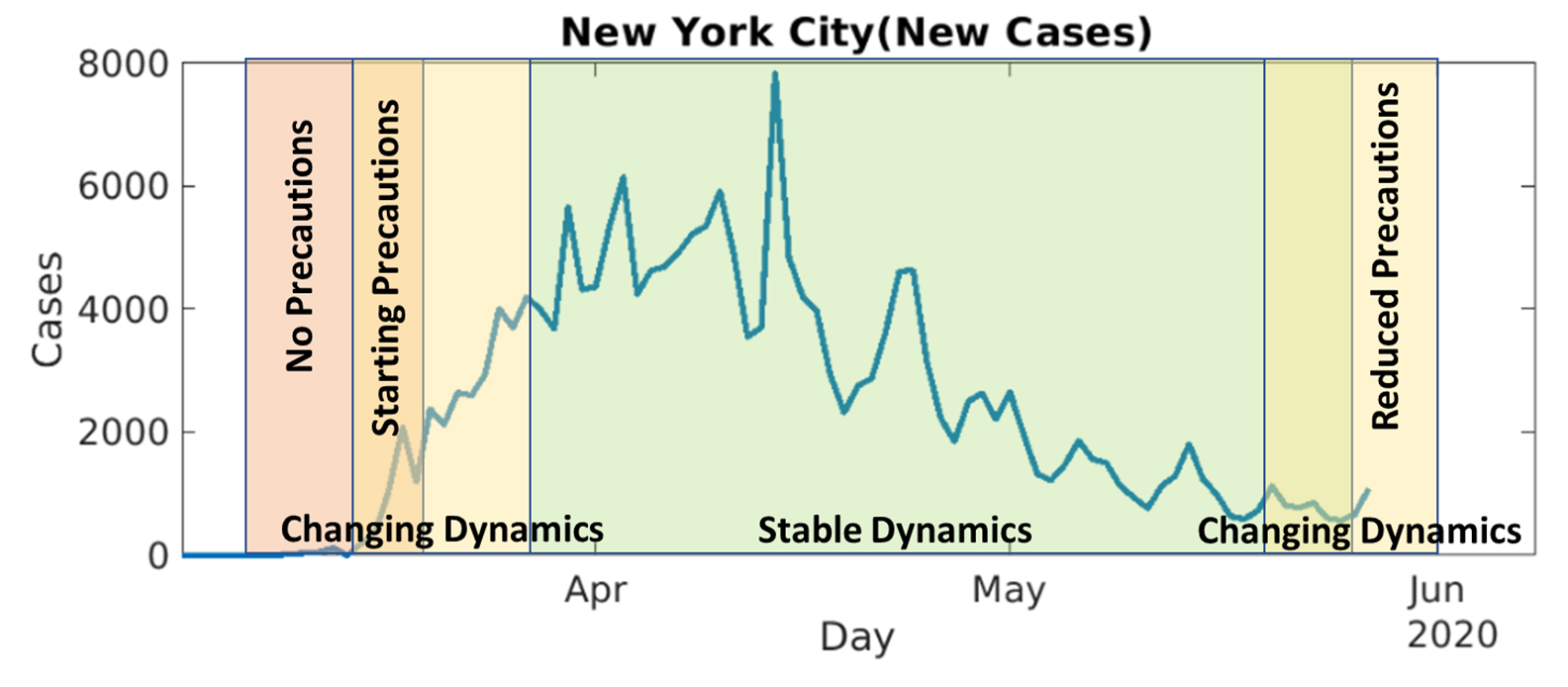}
  \caption{The social distancing phase allows us to model COVID-19 in a way when the effect of latent unreported/asymptomatic cases can be reliably observed.}
  \label{fig:teaser}
\end{teaserfigure}

\maketitle
\pagestyle{plain}

\newcommand{\gbar}{\bar{\gamma}}

\section{Introduction}

During the current COVID-19 pandemic, researchers have attempted to estimate the number of cases that are not being reported using antibody tests~\cite{bendavid2020covid}. This number is useful as it dictates the number of susceptible individuals, which in turn affects the long-term dynamics of the epidemic. 

We take a data-driven approach to model the existence of unreported cases in terms of probability of a case being reported. Due to a long period of social distancing, the infection dynamics are `stable',i.e, the parameters that drive the number of cases can be assumed to be constant over the period. This is unlike the earlier phase when the world had just started taking precautions during which a single model with fixed parameters would not have been able to explain the trends. Using the data from this ``stable'' phase (see Figure~\ref{fig:teaser}) of social distancing phase and before the precautions are reduced, we may be able to observe the effect of unreported cases.
We demonstrate that the probability of reporting can be reliably obtained only from certain parts of the time-series.
This in turn provides an estimated upper bound on the number of total actual cases as a factor of number of reported cases. Particularly, we prove that the probability of reporting has a negligible effect on the trend of reported cases in the initial part of the epidemic. Therefore, during that period, we cannot reliably learn the reporting probability. On the other hand, we also prove that learned probability is not reliable using only the later phase of the epidemic. Thus, there is a certain time interval over which the learned bound on reporting probability is reliable. We leverage the fact that reporting probability has negligible effect on the initial part of the timeseries and significant impact in the later part to construct an algorithm termed Fixed Infection Rate method. Our method can guarantee that the obtained upper bound is close to the true upper-bound. We also propose two heuristics that attempt to learn this upper-bound without any guarantee.
While we can also attempt to identify this bound without relying on a `stable' phase using adaptive models~\cite{srivastava2020learning}, it will introduce more hyperparameters making our estimation less reliable.

We are learning a lower bound on reporting probability (and correspondingly, upper bound on the actual cases) because we can only measure the combined effect of probability of reporting and complete isolation (see Section~\ref{sec:gbar}). This complete isolation is different from reducing social interactions. Reduced social interactions reduces the probability of a randomly selected infected person affecting a randomly selected susceptible person. On the other hand, complete isolation implies that a part of the population is removed and does not participate in the epidemic, effectively reducing the population by a constant factor. Since this factor is not known, we can only obtain a lower bound on reporting probability or an upper bound on the total cases as a factor of reported cases.

We proceed with an extension of the model proposed in~\cite{srivastava2020learning} which has been shown to perform accurate forecasts. We have previously used a preliminary version of this model in the DARPA Chikungunya forecasting challenge~\cite{darpachanllenge}, where we were one of the winners~\cite{darpawinners}. However, our approach for identifying the right data to reliably learn reporting probability may be applicable to other epidemiological models as well.
Our contributions are the following:
\begin{itemize}
    \item We propose an extension of our prior heterogeneous infection rate model that incorporates unreported/asymptomatic cases in the form of a parameter that measures the ratio of reported cases to actual number of cases.
    \item We prove that a bound on number of unreported cases can be reliably estimated only from certain data. 
    \item We propose Fixed Infection Rate Learning, an algorithm that leverages the effect of data on the model parameters to reliably identify a lower bound on reporting probability (and correspondingly, upper bound on actual cases as a factor of reported cases).
    \item We also propose two heuristics -- Non-linear Incremental Learning and Non-linear Curve Fitting, that attempt to learn a lower bound on reporting probability, but do not provide reliability guarantees.
    \item On simulated data, we show that our proposed method and proposed heuristics are accurately able to retrieve the ratio of reported to actual cases.
    \item We use our approaches to identify the lower bounds on the ratio of reported to actual cases for New York City and several US states.Our results demonstrate with high confidence that the actual number of cases are cannot be more than 35 times in New York, 40 times in Illinois, 38 times in Massachusetts and 29 times in New Jersey, than the reported cases.
\end{itemize}

\section{Related Work}

\subsection{Modeling Unreported Cases}
Several works in the literature~\cite{magal2018parameter,ducrot2020identifying,liu2020understanding} have attempted to model unreported cases by adding states such as asymptomatic and unreported to the Susceptible-Infected-Removed (SIR) model~\cite{bjornstad2002dynamics}.
Magal and Webb~\cite{magal2018parameter} propose a methodology for SIR model, that can determine the probability of reporting. This approach assumes that the `turning point', i.e., the time at which the number of new cases peaks, is known.
Ducrot et. al.~\cite{ducrot2020identifying} propose a method for identification of unreported cases from reported cases when the model parameters satisfy certain properties in an extension of SIR model. 
Liu et. al.~\cite{liu2020understanding} use a similar model but do not discuss the learnability of parameters related to asymptomatic and unreported cases.

\subsection{The SI-kJ$\alpha$ Model}
In~\cite{srivastava2020learning}, we proposed the SI-kJ$\alpha$ model 
for the spread of a virus like COVID-19 across the world which captures (i) temporally varying infection rates (ii) arbitrary regions, and (iii) human mobility patterns. Within every region (hospital/city/state/country), an individual can exist in either one of two states: susceptible and infected. A susceptible individual gets infected when in contact with an infected individual at a rate depending on when that individual got infected, i.e., rate of infection  is $\beta_1$ for an individual infected between $t-1$ and $t-J$, $\beta_2$ for an individual infected between $t-J$ and $t-2J$, and so on, thus resulting in $k$ sub-states of infection.
$J$ is a hyperparameters introduced for a smoothing effect to deal with noisy data.
It also avoids overfitting the model by using a small $k$ to capture dependency on the last $kJ$ days.
The hypothesis is that how actively one passes on the infection is affected by when they get infected. 
We assume that after being infected for a certain time, individuals no longer spread the infection, i.e., $\exists k$, such that $\beta_i = 0 \forall i>k$.

Also, people traveling from other regions can increase the number of infections in a given region. We assume that this infection can happen because of human mobility. Suppose $F(q, p)$ represents mobility from region $q$ to region $p$. 
Our model is represented by the following system of equations.

\begin{align}
\Delta S_t^p &= - \frac{S_{t}^p}{N^p} \sum_{i=1}^k \beta_i^p \Delta I_{t-i}^p \,,\label{eqn:delS}\\
\Delta I_t^p &= \frac{S_{t}^p}{N^p} \sum_{i=1}^k \beta_i^p (I_{t-(i-1)J}^p -I_{t-iJ}^p)  \nonumber \\
&+\delta \sum_q F(q, p) \frac{\sum_{i=1}^k \beta_i^q (I_{t-(i-1)J}^q -I_{t-iJ}^q)}{N^q}\,. \label{eqn:delI_kJ} 
\end{align}
Here, $S_t^p$ and $I_t^p$ represent the number of susceptible individuals and infected individuals respectively in the region $p$ at time $t$. 
Parameter $\delta$ captures the influence of passengers coming into the region.

Note that if we set $k=1, J = \infty$, and ignore mobility ($\delta =  0$), this reduces to Susceptible-Infected (SI) model~\cite{zhou2006behaviors}. On the other hand, with bounded $k=1$ and $J < \infty$, the model is a variation of Suceptible-Infected-Released/Recovered (SIR) model~\cite{bjornstad2002dynamics}, where an infected individual is active for $J$ units of time.

\section{Modeling Unreported Cases}

While unreported cases are not observed in the data, they affect the long term dynamics by infecting other individuals and by also reducing the number of susceptible individuals. 

The individuals who are never accounted for in the reporting (in the past or the future) can be classified into two categories: (i) unreported cases - those who get infected over the course of the epidemic but do not report it; and (ii) immune/isolated cases - those who have the antibodies without being infected during the epidemic or those who are completely isolated and have 0 probability of getting infected. For unreported cases, we can add another state to our model: An individual in the $i^{th}$ ``infected'' sub-state will be reported with probability $\gamma_i^p$. Thus, the total number of new reported cases is given by $\Delta R_t^p  = \sum_{i=1}^k \gamma_i^p (I_{t-(i-1)J}^p - I_{t-iJ}^p )$. Then the parameters will be learned by fitting the reported cases to $R_t^p$. The immune/isolated cases can be modeled as considering them not-susceptible, and hence not involved in the epidemic. This effectively reduces the size of the population considered for epidemic modeling.
Suppose, $\rho^p$ is the probability of a randomly selected individual in region $p$ to be immune/isolated. Then the number of susceptible individuals at time $t$ is given by $S_t^p = (1-\rho^p)N^p - I_t^p$, and $(1-\rho^p)N^p$ represents the reduced size of the population.

\subsection{Model Simplifications for Social Distancing}\label{sec:gbar}
In the period of social distancing, we assume that majority of the spread is ``community spread'' and infections due to travel across the regions (state/counties) can be ignored. For ease of notation, we drop the superscript $p$. For simplicity, we assume that $\gamma_i = \gamma, \forall i$. Further, we redefine $I_t$ to be the cumulative cases that could have been reported at time $t$ and $R_t$ to be the cases actually reported. This allows us to ignore explicit modeling of reporting delays. Therefore, we have
\begin{align}
\label{eqn:reported}
\Delta R_t  &= \gamma \sum_{i=1}^k (I_{t-(i-1)J} - I_{t-iJ})\nonumber \\
\mbox{And } R_t  &= \gamma I_t\,.
\end{align}

Combining Equation~\ref{eqn:reported} with Equation~\ref{eqn:delI_kJ} without the travel spread and adjusted population size, we get:
\begin{align}
\label{eqn:main}
    \frac{\Delta R_t}{\gamma} &= \frac{S_{t}}{(1-\rho)N} \sum_{i=1}^k \beta_i^p \frac{(I_{t-iJ} -I_{t-(i-1)J})}{\gamma} \nonumber \\
    \implies \Delta R_t &= \frac{(1-\rho)N - R_{t}/\gamma}{(1-\rho)N} \sum_{i=1}^k \beta_i^p (R_{t-iJ} - R_{t-(i-1)J}) \nonumber \\
    \implies \Delta R_t &= \left( 1 - \frac{R_{t}}{\gamma(1-\rho)N} \right) \sum_{i=1}^k \beta_i^p (R_{t-iJ} - R_{t-(i-1)J})
\end{align}

Equation~\ref{eqn:main} implies that only using the reported cases, the impact of $\gamma$ and $\delta$ cannot be separately measured. Setting $\bar{\gamma}  = \gamma(1-\rho) \leq \gamma$, we can identify a lower bound on $\gamma$. Note that $\gamma$ and $\rho$ are not separately needed to be able to forecast the number of reported cases, and knowing $\gbar$ is enough. However, this applies only when the infection dynamics are not changing. In the future, as the social distancing policies are relaxed, $\rho$ is expected to change and approach $1$, while $\gamma$ may remain constant assuming enough testing availability. Therefore, we wish to learn $\gamma$ but at this point, we can only identify $\gbar$ which forms a lower bound for $\gamma$.

\begin{figure*}[!htpb]
  \centering
  \subcaptionbox{Initial Phase\label{fig:sim_pre}}{
    \includegraphics[width=0.31\textwidth]{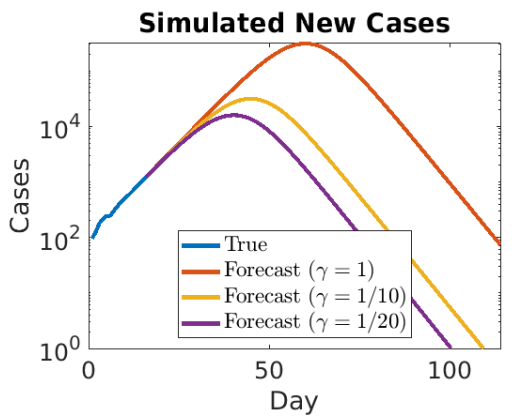}
  }
  \subcaptionbox{Around Peak\label{fig:sim_peak}}{%
    \includegraphics[width=0.31\textwidth]{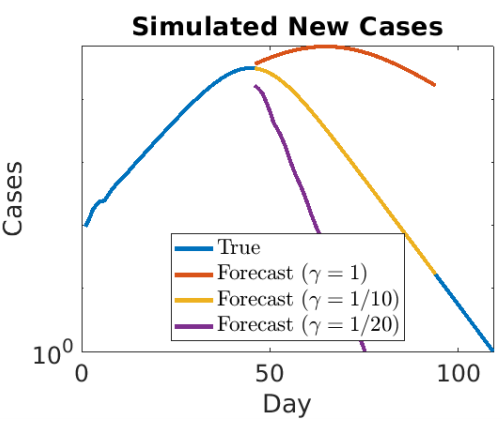}
  }
  \subcaptionbox{`Tail' Phase\label{fig:sim_post}}{%
    \includegraphics[width=0.33\textwidth]{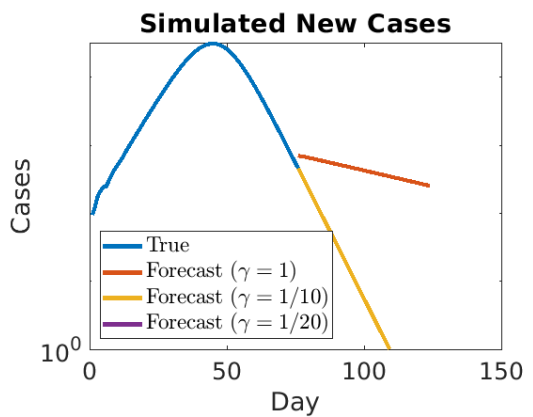}
  }
  \caption{Effect of varying $\gamma$ at different phases of the epidemic on the reported cases.}
  \label{fig:sim_sensitivity}
\end{figure*}

\subsection{Parameter Learnability}\label{sec:leanrability}

Let $\mathbf{\beta} = [\beta_1 \dots \beta_k]$, and $\mathbf{X_t} = [(R_{t} - R_{t-J}) \dots (R_{t-(k-1)J} - R_{t-kJ})]^T$. Sensitivity of $\Delta R$ with respect to $\gamma$ is

\begin{align}
\label{eqn:gamma_sens}
    \frac{\partial \Delta R_t}{\partial \gbar} = \frac{R_{t}}{\gbar^2 N} \mathbf{X_t}\mathbf{\beta}.
\end{align}
\begin{align}
\label{eqn:beta_sens}
    \frac{\partial \Delta R_t}{\partial \beta} = \left(1 - \frac{R_{t}}{\gbar N}\right) \mathbf{X_t}.
\end{align}
In the initial phase of the epidemic, $\frac{R_{t-1}}{N} \approx 0$. Therefore, Equation~\ref{eqn:gamma_sens} suggests that the number of reported cases is not sensitive to $\gbar$ in the initial phase of the epidemic, when $\frac{R_{t-1}}{N} \approx 0$. On the other hand, Equation~\ref{eqn:beta_sens} suggests that number of new reported cases is sensitive to $\beta$. 

Suppose, $\gbar^*$ is the true value and we train by ignoring the parameter, effectively setting it to $1$ to obtain $\mathbf{\beta_0}$. Then, we get the same timeseries, if $\forall t$,

\begin{align}
\label{eqn:fixed}
    \left( 1 - \frac{R_t}{\gbar^* N}\right) \mathbf{X_t}\mathbf{\beta^*} = \left( 1 - \frac{R_t}{N}\right) \mathbf{X_t}\mathbf{\beta_0} \nonumber \\
    \frac{\mathbf{X_t}\mathbf{\beta_0}}{\mathbf{X_t}\mathbf{\beta^*}} = 1 - \frac{R_t(1-\gbar)}{\gbar (N - R_t)}\,,
\end{align}
which is close to 1, when $R_t \ll N$. Figure~\ref{fig:sim_sensitivity} demonstrates this fact. We simulate an epidemic with $\beta = [0.4\,\, 0.2], N = 1,000,000$ and $\gbar = \gamma = 1/10$. We then attempt to ``forecast'' assuming the knowledge of $\beta$, and various values of $\gbar = \gamma =1,  1/10$ and $1/20$. Observe that in the initial phase of the epidemic (Figure~\ref{fig:sim_pre}) all three trends are similar until they get close to the peak. Starting at the peak (Figure~\ref{fig:sim_peak}) and after the peak (Figure~\ref{fig:sim_post}), with the same initial values and $\beta$, significantly different forecasts are obtained by varying $\gamma$. By setting $k=1$ in Equation~\ref{eqn:fixed}, the following can be easily proved.
\begin{theorem}
\label{thm:fixed}
For a given $R_f$,  $\exists \epsilon > 0$, such that $\forall  R_t \leq R_f$, $0 \leq \frac{\beta^* - \beta_0}{\beta^*} \leq \epsilon$.
\end{theorem}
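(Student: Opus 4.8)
The plan is to read the statement directly off Equation~\ref{eqn:fixed} after specializing to $k=1$. When $k=1$, the vector $\mathbf{X_t}$ reduces to the single scalar $R_{t}-R_{t-J}$ and $\beta$ becomes a scalar; on every time step with fresh cases ($R_t\neq R_{t-J}$) this common factor cancels, so $\mathbf{X_t}\mathbf{\beta_0}/\mathbf{X_t}\mathbf{\beta^*}=\beta_0/\beta^*$ and Equation~\ref{eqn:fixed} becomes
\[
\frac{\beta_0}{\beta^*}=1-\frac{R_t(1-\gbar)}{\gbar\,(N-R_t)},
\qquad\text{equivalently}\qquad
\frac{\beta^*-\beta_0}{\beta^*}=\frac{R_t(1-\gbar)}{\gbar\,(N-R_t)}=:h(R_t),
\]
where I define $h(x)=\frac{1-\gbar}{\gbar}\cdot\frac{x}{N-x}$ on $[0,N)$. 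It then suffices to bound the scalar function $h$ on the interval $[0,R_f]$ (using $0\le R_f<N$).

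The lower bound is immediate: since $\gbar=\gamma(1-\rho)\in(0,1]$ we have $(1-\gbar)/\gbar\ge 0$, and $x/(N-x)\ge 0$ for $x\in[0,N)$, so $h\ge 0$ on $[0,R_f]$, giving $0\le(\beta^*-\beta_0)/\beta^*$. For the upper bound the only substantive step is monotonicity of $h$: writing $\frac{x}{N-x}=\frac{N}{N-x}-1$ shows this ratio is strictly increasing on $[0,N)$, and multiplying by the nonnegative constant $(1-\gbar)/\gbar$ shows $h$ is nondecreasing there. Hence $h(R_t)\le h(R_f)$ whenever $R_t\le R_f$, and we may take
\[
\epsilon=h(R_f)=\frac{R_f(1-\gbar)}{\gbar\,(N-R_f)},
\]
which is positive whenever $\gbar<1$ and $R_f>0$ (and if $\gbar=1$ then $\beta_0=\beta^*$ and any positive $\epsilon$ works). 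I would also remark that $\epsilon\to 0$ as $R_f\to 0$, which is the quantitative content that motivates the theorem: on early data the fitted $\beta_0$ is essentially indistinguishable from the truth.

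The one point deserving a word of care — rather than a genuine obstacle — is that Equation~\ref{eqn:fixed} is a per-time-step identity, whereas $\beta_0$ is the single constant produced by fitting with $\gbar$ forced to $1$. This is precisely why the bound is stated uniformly over $R_t\le R_f$: any constant $\beta_0$ consistent with data confined to $\{R_t\le R_f\}$ must match the per-step effective value $\beta^*\bigl(1-h(R_t)\bigr)$ at each such $t$, and all of those values lie in the window $[(1-\epsilon)\beta^*,\beta^*]$, so $\beta_0$ is trapped in the same window. No estimate is lost in the argument; the whole content is the elementary monotonicity of $x\mapsto x/(N-x)$, and the remaining steps are routine algebra.
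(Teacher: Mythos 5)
Your proposal is correct and follows essentially the same route as the paper: specialize Equation~\ref{eqn:fixed} to $k=1$, read off $\frac{\beta^*-\beta_0}{\beta^*}=\frac{R_t(1-\gbar)}{\gbar(N-R_t)}$, and take $\epsilon=\frac{R_f(1-\gbar)}{\gbar(N-R_f)}$, which is exactly the paper's choice. You merely make explicit the monotonicity of $x\mapsto x/(N-x)$ and the $\gbar=1$ edge case, which the paper leaves as ``easy to see.''
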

\begin{proof}
From Equation~\ref{eqn:fixed}, easy to see that $\beta_0 \leq \beta^*$. Setting $\epsilon = \frac{R_f(1-\gbar^*)}{\gbar (N - R_f)}$ completes the proof.
\end{proof}

Theorem~\ref{thm:fixed} suggests that early part of the epidemic is not reliable for learning $\gbar$. However, this does not imply that we should always prefer a high value of $t$ in the following where we explore the effect of the ``tail" part of the epidemic on the learnability of $\gbar$.
\begin{lemma}
There exists $\tau$ such that $\gbar$ that describes the data for $R_t > R_{\tau}$ is not unique.
\end{lemma}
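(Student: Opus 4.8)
\emph{Proof plan.} The plan is to exhibit, for $\tau$ large enough, a whole one‑parameter family of values of $\gbar$ (each paired with a suitable $\beta$) that reproduces the observed increments on $\{t : R_t > R_\tau\}$, so that the fitted $\gbar$ is not pinned down there. Write Equation~\ref{eqn:main} as $\Delta R_t = \bigl(1 - R_t/(\gbar N)\bigr)\, b_t$ with $b_t := \sum_{i=1}^k \beta_i\,(R_{t-(i-1)J} - R_{t-iJ}) \ge 0$. The first step is the elementary observation that along any solution with $\beta_i \ge 0$ and $R_t < \gbar N$, the cumulative count $R_t$ is nondecreasing and bounded above by $\gbar N$; hence $R_t \uparrow R_\infty$ for some limit $R_\infty$ (note $\gbar \ge R_\infty/N$ is forced, which is exactly the lower bound the paper seeks), and therefore $b_t \to 0$ and $\Delta R_t \to 0$. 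In other words, the ``tail'' is precisely the regime in which both sides of Equation~\ref{eqn:main} are arbitrarily small, and $R_\infty = \lim_t R_t$ is fixed by the data alone.

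The heart of the argument is that near $R_\infty$ the dynamics are effectively linear and see $\gbar$ and $\beta$ only through one combination. Writing $\varepsilon_t := R_\infty - R_t \downarrow 0$ and expanding $1 - R_t/(\gbar N) = c + \varepsilon_t/(\gbar N)$ with $c := 1 - R_\infty/(\gbar N)$, the residual satisfies, to leading order, the linear recurrence $\Delta\varepsilon_t = c\sum_{i=1}^k \beta_i\,(\varepsilon_{t-(i-1)J} - \varepsilon_{t-iJ}) + O(\varepsilon_t^2/R_\infty)$, whose solutions --- equivalently, the limiting value of the observable ratio $\Delta R_t / b_t$ --- involve the parameters only through the products $c\,\beta_i = \bigl(1 - R_\infty/(\gbar N)\bigr)\beta_i$. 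So, fixing the true parameters $(\gbar^*,\beta^*)$ that generate the data, for every $\gbar \in (R_\infty/N,\infty)$ one may set $\beta_i := \frac{1 - R_\infty/(\gbar^* N)}{1 - R_\infty/(\gbar N)}\,\beta_i^*$, and the trajectory generated by $(\gbar,\beta)$ agrees with the data on $\{t: R_t > R_\tau\}$ up to an error controlled by $\sup_{R_t > R_\tau}\varepsilon_t^2/R_\infty$, which $\to 0$ as $\tau \to \infty$. This already gives nonuniqueness in the $\tau\to\infty$ limit, and hence a finite $\tau$ beyond which the discrepancy drops below the noise/resolution of the data: a nondegenerate interval of $\gbar$ ``describes the data for $R_t > R_\tau$''. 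The same collapse is visible infinitesimally through Equations~\ref{eqn:gamma_sens} and~\ref{eqn:beta_sens}: for $k=1$ each Jacobian row $\partial \Delta R_t/\partial(\gbar,\beta)$ is $(R_t - R_{t-J})$ times the vector $\bigl[\,R_t\beta/(\gbar^2 N),\ 1 - R_t/(\gbar N)\,\bigr]$, and as $R_t \to R_\infty$ all such rows become parallel, so the tail fit loses rank precisely in the $\gbar$ direction.

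The main obstacle is making ``not unique'' rigorous: a finite window of the \emph{exact} nonlinear recurrence generically still determines $(\gbar,\beta)$, so the statement has to be read either modulo the finite precision/noise of the data (which the surrounding analysis already carries) or in the genuine $\tau\to\infty$ linearized limit. The real technical work is the error bookkeeping in the previous paragraph --- in particular tracking how $R_\infty$ itself shifts when $\gbar$ is perturbed (one can absorb this shift into the short segment just before $\tau$, or carry it as an additional $O(\cdot)$ term) and checking that the neglected quadratic terms are genuinely small compared with the signal $b_t$ throughout the tail. Everything else reduces to the monotone‑convergence and linearization steps above, together with the $k=1$ reduction already used for Theorem~\ref{thm:fixed}.
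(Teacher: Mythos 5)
Your proposal is correct and rests on the same underlying degeneracy the paper exploits --- in the tail, the data constrain only the combination $\left(1 - R_t/(\gbar N)\right)\beta$, so $\gbar$ can be traded off against $\beta$ --- but you execute it by a genuinely different route. The paper fixes $k=1$, writes the exact pointwise equality $\left(1-\frac{R_t}{\gbar_1 N}\right)\beta_1 = \left(1-\frac{R_t}{\gbar_2 N}\right)\beta_2$, solves for $\gbar_2$ in closed form (Equation~\ref{eqn:gbar2}), and derives explicit thresholds on $R_t$ under which the alternative $\gbar_2$ lies in $(0,1]$. You instead linearize around the saturation level $R_\infty$, observe that the limiting recurrence sees only the products $\left(1 - R_\infty/(\gbar N)\right)\beta_i$, and exhibit a one-parameter family of $(\gbar,\beta)$ matching the tail up to $O(\varepsilon_t^2)$, with the rank-deficiency of the Jacobian as a sanity check. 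Your route buys two things: it works for general $k$ (no reduction to scalar $\beta$ is needed for the linearized collapse), and it is more candid about a point the paper glosses over --- the $\gbar_2$ of Equation~\ref{eqn:gbar2} depends on $R_t$, so a single alternative parameter pair does not exactly reproduce an entire window of noiseless data, and non-uniqueness is exact only in the $\tau\to\infty$ limit or modulo noise, which you state explicitly. What the paper's version buys is a closed-form feasibility condition ($R_t$ exceeding explicit thresholds) that feeds directly into Theorem~\ref{thm:upperT} and the choice of $R_\tau$ in the Fixed Infection Rate method. To fully close your argument you would still need the error and $R_\infty$-shift bookkeeping you flag at the end, but the plan is sound.
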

\begin{proof}
We prove this by showing that there is a $t_u$ such that  for $t > t_u$, there are at least two sets of parameter $(\beta_1, \gbar_1)$ and $(\beta_2, \gbar_2)$ that fit the data for $t > t_u$, i.e., the following has a feasible solution.
\begin{align*}
    \Delta R_t = \left(1 - \frac{R_t}{\gbar_1 N} \right)\mathbf{\beta_1}\mathbf{\Delta X_t} = \left(1 - \frac{R_t}{\gbar_2 N} \right)\mathbf{\beta_2}\mathbf{\Delta X_t}.
\end{align*}
Setting $k=1$, $\mathbf{\Delta X_t}$ becomes a scalar. After some algebraic manipulations, we get
\begin{align}
\label{eqn:gbar2}
    \gbar_2 = \frac{(\beta_2/\beta_1)\gbar_1 R_t}{R_t - (1 - \beta_2/\beta_1)\gbar_1 N}
\end{align}
This is a valid solution, if $0 < \gbar_2 \leq 1$. Without loss of generality, we can assume $\beta_2 < \beta_1$. Then
\begin{align*}
    \gbar_2 > 0 \implies R_t > \gbar_1 N(1 - (\beta_2/\beta_1)),\\
    \mbox{And }
    \gbar_2 \leq 1 \implies R_t > \gbar_1 \frac{N(1 - (\beta_2/\beta_1))}{1- (\beta_2/\beta_1)\gbar_1}.
\end{align*}
Therefore, if the data contains $R_t$ such that the above holds for all $t$, then at least two solutions for $(\beta, \gbar)$ exist.
\end{proof}
The above lemma suggests that we should not attempt to learn the parameters solely from the ``tail'' of the epidemic. However, using the beginning part only, we cannot reliably learn $\gbar$ as discussed earlier. Next, we identify what data needs to be included to guarantee accurate learning of $\gbar$.
\begin{theorem}\label{thm:upperT}
Suppose, $(\beta_0, \gbar_0)$ is a solution obtained from the given data. Let $\beta^*  \geq \beta_0 \geq (1-\epsilon)\beta^*$, for some $0 \leq \epsilon < 1$. Then for any $R_\tau$, there exists a $0<\delta<1$ such that choosing data $R_t > R_\tau$ guarantees that $(1-\delta)\gbar_0 \leq \gbar^* \leq \gbar_0$.
\end{theorem}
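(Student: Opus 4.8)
The plan is to specialize to $k=1$ (as in the preceding results, where $\mathbf{\Delta X_t}$ is a scalar) and to reuse Equation~\ref{eqn:gbar2}. Since both $(\beta^*,\gbar^*)$ and the fitted pair $(\beta_0,\gbar_0)$ must reproduce the observed increments $\Delta R_t$, the manipulation that produced Equation~\ref{eqn:gbar2} applies verbatim with $(\beta_1,\gbar_1)=(\beta_0,\gbar_0)$ and $(\beta_2,\gbar_2)=(\beta^*,\gbar^*)$. Writing $c:=\beta_0/\beta^*\in[1-\epsilon,1]$ (so $\beta^*/\beta_0 = 1/c$), this yields, at every used data point,
\begin{align*}
\gbar^* = \frac{\gbar_0\,R_t}{c\,R_t + (1-c)\,\gbar_0 N}, \qquad\text{hence}\qquad 1 - \frac{\gbar^*}{\gbar_0} = \frac{(1-c)\,(\gbar_0 N - R_t)}{c\,R_t + (1-c)\,\gbar_0 N}.
\end{align*}
This single identity drives everything; what remains is one sign argument for the upper bound and one monotonicity argument for the lower bound.

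For the upper bound $\gbar^*\le\gbar_0$ I would use that $(\beta_0,\gbar_0)$ is an admissible solution: the susceptible fraction $1 - R_t/(\gbar_0 N)$ it assigns must be nonnegative at every data point where the reported cumulative count grows (a negative factor would make the model predict a decrease), so $R_t \le \gbar_0 N$. Combined with $1-c\ge 0$, the displayed formula for $1-\gbar^*/\gbar_0$ is then nonnegative, i.e.\ $\gbar^*\le\gbar_0$. The same inequality $R_t\le\gbar_0 N$ also guarantees the quantity bounded below has the expected sign and that $\gbar_0 N - R_\tau>0$ (otherwise no admissible data could exceed $R_\tau$).

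For the lower bound, view the right-hand side of the identity as a function of $R_t$: its numerator is decreasing and its denominator increasing in $R_t$ (using $c>0$), hence it is decreasing in $R_t$, so the worst case over the data is attained at the smallest $R_t$ used. By hypothesis every used point has $R_t>R_\tau$, so
\begin{align*}
1 - \frac{\gbar^*}{\gbar_0} \;<\; \frac{(1-c)\,(\gbar_0 N - R_\tau)}{c\,R_\tau + (1-c)\,\gbar_0 N} \;=:\; \delta .
\end{align*}
That $\delta\in[0,1)$ follows because its denominator minus its numerator equals $c R_\tau + (1-c)R_\tau = R_\tau>0$; and since this $\delta$ is itself decreasing in $c$, substituting the lower bound $c=1-\epsilon$ gives the explicit, entirely data-computable value $\delta = \frac{\epsilon(\gbar_0 N - R_\tau)}{(1-\epsilon)R_\tau + \epsilon\,\gbar_0 N}\in[0,1)$. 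Together with $\gbar^*\le\gbar_0$ this is exactly $(1-\delta)\gbar_0\le\gbar^*\le\gbar_0$.

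The main obstacle is conceptual, not computational: once $\beta_0\neq\beta^*$ the identity above cannot hold exactly at two data points with different $R_t$ simultaneously, so it is an idealization of the best-fit situation the theorem is really about. The careful version must argue that a least-squares $\gbar_0$ lies inside the envelope traced out by the per-point solutions and that its deviation is controlled by the same order of slack, $\epsilon$, already granted on $\beta$. The monotonicity of $1-\gbar^*/\gbar_0$ in $R_t$ is precisely what makes it safe to anchor the estimate at the threshold $R_\tau$ rather than at any particular observation, and that is the step I would be most careful to make rigorous.
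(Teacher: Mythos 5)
Your proof is correct and pivots on the same object as the paper's: the per-point identity (Equation~\ref{eqn:gbar2}) relating two parameter pairs that reproduce the same $\Delta R_t$, specialized to $k=1$. The difference is in how the bound is extracted. The paper argues by contradiction: assuming $\gbar^* < (1-\delta)\gbar_0$, it bounds the numerator and denominator of that identity \emph{separately} via $\beta_0/\beta^* \le 1$ and $1-\beta_0/\beta^* \le \epsilon$, concludes $R_t < N\epsilon\gbar^*/\delta$, and lets a sufficiently large data point rule this out; inverting gives $\delta = N\epsilon\gbar_0/R_\tau$, which is exactly the quantity reused in Equation~\ref{eqn:delta} and policed by Test1 precisely because it can exceed $1$. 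You instead solve the identity directly for $1-\gbar^*/\gbar_0$ and exploit its joint monotonicity in $R_t$ and $c=\beta_0/\beta^*$, arriving at $\delta = \epsilon(\gbar_0 N - R_\tau)\big/\bigl((1-\epsilon)R_\tau + \epsilon\gbar_0 N\bigr)$. This is strictly tighter than the paper's $\delta$ (smaller numerator, larger denominator whenever $R_\tau < \gbar_0 N$) and lies in $[0,1)$ automatically, so it actually delivers the ``there exists $0<\delta<1$'' clause of the statement more faithfully than the paper's own choice does. Your explicit justification of the upper bound $\gbar^*\le\gbar_0$ via the admissibility condition $R_t\le\gbar_0 N$ also fills in a step the paper merely asserts. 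The caveat you flag at the end --- that the identity cannot hold exactly at two data points with different $R_t$ once $\beta_0\ne\beta^*$, so the argument idealizes the least-squares fit --- is a real gap, but it is present in the paper's proof in exactly the same form, so it does not separate your argument from theirs.
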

\begin{proof}
Since, $\beta_0 \leq \beta^*$, $\gbar_0 \geq \gbar^*$. Suppose, for some $\delta > 0$, we wish to prove that that $\gbar^* \geq  (1-\delta)\gbar_0$. Assume the contrary that $\gbar^* < (1-\delta)\gbar_0$.
Then, using Equation~\ref{eqn:gbar2},
\begin{align*}
    \gbar_0 &= \frac{\gbar^*(\beta_0/\beta^*)R_t/N}{R_t/N - (1-\beta_0/\beta^*)\gbar^*} \\
    \implies \frac{\gbar^*}{1-\delta} &< \frac{\gbar^*(\beta_0/\beta^*)R_t/N}{R_t/N - (1-\beta_0/\beta^*)\gbar^*}
\end{align*}

Using $\beta^* \geq \beta_0$ in the numerator and $\beta_0 \geq (1-\epsilon)\beta^*$ in the denominator of the RHS, we get
\begin{align*}
    \frac{1}{1-\delta} &< \frac{R_t/N}{R_t/N - \epsilon\gbar^*}\\
    \implies R_t &< \frac{N\epsilon \gbar^*}{\delta} \leq \frac{N\epsilon \gbar_0}{\delta}.
\end{align*}
Therefore, if $R_t \geq R_\tau = \frac{N\epsilon \gbar_0}{\delta}$ the above is not feasible, and thus $\gbar^* \geq  (1-\delta)\gbar_0$.
\end{proof}

Finally, we present how $\gbar$ affects the peak of the epidemic.
\begin{theorem}\label{thm:peak}
If the peak of new cases happens when the total cases are $R_{peak}$, then 
\begin{equation}
    \gbar \approx \frac{R_{peak}/N}{1 - \frac{1}{J\|\mathbf{\beta}\|_1}},
\end{equation}
where $\|\mathbf{\beta}\|_1 = \sum_i \beta_i$.
\end{theorem}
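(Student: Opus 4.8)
The plan is to work directly from the reduced dynamics in Equation~\ref{eqn:main}, which I will abbreviate as $\Delta R_t = \left(1 - \frac{R_t}{\gbar N}\right)\mathbf{X_t}\mathbf{\beta}$, where $\mathbf{X_t}\mathbf{\beta} = \sum_{i=1}^k \beta_i\,(R_{t-(i-1)J} - R_{t-iJ})$ is the force-of-infection term, and to translate the defining property of the peak into an algebraic identity at $t = t_{peak}$. The peak of new cases means $\Delta R_t$ is locally maximal, so $\Delta R_{t_{peak}+1} \approx \Delta R_{t_{peak}}$ (the discrete second difference of $R$ vanishes); in particular $\Delta R_{t_{peak}} > 0$, and $R$ is nearly linear over a window of length $kJ$ around $t_{peak}$ because new cases are essentially flat there.

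The key step is to estimate $\mathbf{X_t}\mathbf{\beta}$ at the peak. Each bracket $R_{t-(i-1)J} - R_{t-iJ}$ counts the new reported cases accumulated over a window of length $J$; since new cases are approximately constant in a neighbourhood of the peak, this equals $\approx J\,\Delta R_t$ for every $i$ whose window lies in that neighbourhood, so summing over $i$ gives $\mathbf{X_t}\mathbf{\beta} \approx J\,\|\mathbf{\beta}\|_1\,\Delta R_t$ at $t = t_{peak}$. Substituting this into Equation~\ref{eqn:main} and cancelling the common factor $\Delta R_{t_{peak}} \neq 0$ yields
\begin{equation*}
1 \approx \left(1 - \frac{R_{peak}}{\gbar N}\right) J\,\|\mathbf{\beta}\|_1 ,
\end{equation*}
hence $\frac{R_{peak}}{\gbar N} \approx 1 - \frac{1}{J\|\mathbf{\beta}\|_1}$, which rearranges to the claimed expression $\gbar \approx \dfrac{R_{peak}/N}{1 - 1/(J\|\mathbf{\beta}\|_1)}$.

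The main obstacle — and the reason the statement is only approximate — is justifying both that the flatness approximation is accurate enough near the peak and, more subtly, that the resulting identity is genuinely special to the peak rather than one that would (falsely) hold at every $t$. I would resolve this by tracking the ratio $\mathbf{X_t}\mathbf{\beta}/\Delta R_t$: in the growth phase the look-back increments $R_{t-(i-1)J}-R_{t-iJ}$ are strictly smaller than $J\,\Delta R_t$ (past daily counts were lower), so $\mathbf{X_t}\mathbf{\beta} < J\|\mathbf{\beta}\|_1\Delta R_t$ and therefore $\left(1 - \frac{R_t}{\gbar N}\right)J\|\mathbf{\beta}\|_1 > 1$; in the decay phase the inequality reverses; and since $R_t$ is monotone increasing, the crossover $\left(1 - \frac{R_t}{\gbar N}\right)J\|\mathbf{\beta}\|_1 = 1$ occurs exactly where $\Delta R$ stops increasing, i.e. at $t_{peak}$. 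Making ``the windows lie in the flat neighbourhood'' quantitative would require either an explicit bound on the curvature of $R$ near the peak (controlling $\Delta^2 R$ relative to $\Delta R$ over the last $kJ$ steps) or passing to the continuous-time limit of Equation~\ref{eqn:main} and bounding $\dot g/g$ against $\dot r/(\gbar N - r)$; either route converts the ``$\approx$'' into an explicit error term, which is the best one can expect here.
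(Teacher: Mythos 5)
Your proposal is correct and follows essentially the same route as the paper: assume $\Delta R_t$ is constant over a window of length $kJ$ around the peak so that $\mathbf{X_t}\mathbf{\beta} \approx J\|\mathbf{\beta}\|_1 \Delta R_{t_{peak}}$, substitute into Equation~\ref{eqn:main}, cancel the nonzero $\Delta R_{t_{peak}}$, and rearrange. The additional monotonicity/crossover discussion you give is extra justification the paper omits, but the core argument is identical.
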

\begin{proof}
At the peak, we assume that $\Delta R_t$ remains constant for a window of $kJ + 1$ time steps, i.e., $\Delta R_t = r, \forall t = \tau, \tau - 1, \dots, \tau - kJ$. Then $\mathbf{\beta}\mathbf{X} = J\|\mathbf{\beta}\|_1$. Therefore, we have
\begin{align}
    r &\approx \left(1 - \frac{R_\tau}{\gbar N} \right)J\|\mathbf{\beta}\|_1 r \nonumber \\
    \implies  \gbar &\approx \frac{R_\tau/N}{1 - \frac{1}{J\|\mathbf{\beta}\|_1}}.
\end{align}
\end{proof}

Next, we utilize Theorems~\ref{thm:fixed},~\ref{thm:upperT}, and~\ref{thm:peak} to learn the parameters $\beta$ and $\gbar$.

\section{Learning}
Unlike~\cite{srivastava2020learning} where the goal was to perform forecasts in an adaptive fashion even during changing policies, here, our main goal is identifying $\gbar$. This knowledge can then be used for performing forecasts. For learning, we first manually identify and remove the part of the data where the effect of social distancing is visible. For instance, in Figure~\ref{fig:teaser} the initial part shows rapid rise when no precautions were taken. This step is necessary for our axiom that the remaining data can be assumed to follow the same dynamics, i.e, has a true unique $(\beta,\gbar)$.





\subsection{Fixed Infection Rate Method}

In this approach we utilize the fact that the effect of the unreported cases is not seen in the initial part of the infection. 
Therefore, we consider an initial part of the reported cases data up to time $t_f$. We use this initial part to train the model to learn $\beta_0$  by fixing $\gbar = 1$. Then, by Theorem~\ref{thm:fixed}, $\beta^* \leq \beta_0 \leq (1-\epsilon)\beta^*$, for some $\epsilon$.

Then, we train a linear model by fixing the previously learned $\beta_0$ as a constant and learn $\gbar_0$.
We identify the largest value of $R_t = R_{max}$ available in the dataset. 
From Theorem~\ref{thm:upperT}, it follows that setting 
\begin{equation}\label{eqn:delta}
\delta = \frac{N\epsilon \gbar_0}{R_{max}}    
\end{equation}
 ensures that there is at least one data point for the model to identify $\gbar_0$ such that $(1-\delta)\gbar_0 \leq \gbar^*$. 

Next we discuss, how to identify the value of $\epsilon$. Note that $\epsilon$ as calculated in Theorem~\ref{thm:fixed} relies on $\gbar^*$, which is not known. We use the fact that a $\delta$ must exist such that $\gbar^* \geq (1-\delta)\gbar_0$. Using this bound in Theorem~\ref{thm:fixed}, there exists a $\delta$ for which
\begin{equation}\label{eqn:eps_bound}
    \epsilon = \frac{R_f\left(1 - {(1 - \delta)\gbar_0}\right)}{(N - R_f){(1-\delta)\gbar_0}}.
 \end{equation}
 Putting the value $\epsilon$ in Equation~\ref{eqn:delta} results in a quadratic equation in $\delta$ with the smaller root
 \begin{equation}\label{eqn:final_delta}
     \delta = \frac{1-\frac{R_f}{N} - \gbar_0 - \sqrt{\left(1-\frac{R_f}{N} - \gbar_0\right)^2 - 4\left(1-\frac{R_f}{N}\right)\left(\frac{R_f}{R_{max}}-\gbar_0\right)}}{2\left(1- \frac{R_f}{N}\right)}
 \end{equation}
 
 \begin{itemize}
     \item Test1 (hard): Is $\delta$ a real number and in $(0, 1)$? If not, then the method fails, as we are unable to guarantee reliability. 
     \item Test2 (soft): For a given $\delta_3 > 0$, and the number of cumulative reported cases at the peak $R_{peak}$, $(1-\delta_3)\gbar_0 \leq \frac{R_{peak}/N}{1 - 1/(J\|\beta_0\|)} \leq (1+\delta_3)\gbar_0$? This is a ``soft'' test in the sense that it is based on an approximation and can be performed only if the ``peak" is available. Identifying the actual peak is difficult due to noisy data, and thus $\delta_3$ provides a soft margin for the peak.
 \end{itemize}
 
The parameters are learned using least square estimation:
\begin{align}
    LSE &= \sum_{t=\tau}^{T} \left( \left( 1 - \frac{\hat{R_t}}{\gbar N}\right) \mathbf{X_t}\beta - \Delta \hat{R_t}\right)^2 \\
\end{align}
Here $\hat{R_t} \forall t$ are true observed values. Least square optimization is performed using trust-region reflective algorithm~\cite{coleman1996interior}. Note that the above approach may be prone to noisy initial values. However, we smooth the data before learning the parameters to avoid noise.

Alternatively, the initial values $\Delta R_{\tau-J}, \dots, \Delta R_{\tau-1}$ can also be treated as learnable parameters. In this case, we fit the curve obtained by the recurrence relation $\Delta R_t = \left(1 - \frac{R_t}{\gbar N}\right)X_t\beta$ to the observed data $<\Delta R_{\tau-J}, \dots \Delta R_{\tau-1}, R_{\tau}, R_{\tau+1}\dots \Delta R_{T}>$. While this approach is better for dealing with noisy data, it may be prone to overfitting due to additional $J$ parameters. Least square optimization is performed using trust-region reflective algorithm~\cite{coleman1996interior}. 

\subsection{Heuristic Methods}
We also propose treating $\beta$ and $\gbar$ simultaneously as learnable parameters as a heuristic approach. Since, Theorems~\ref{thm:fixed} and~\ref{thm:upperT} do not apply, we cannot perform Test1 to ensure reliability. However, we can perform Test2.
As in the case of Fixed Infection Rate Method, we have two ways of learning the heuristic models.

\paragraph{Non-linear Incremental Learning} 
The parameters are learned using least square estimation:
\begin{align}
    LSE &= \sum_{t=\tau}^{T} \left( \left( 1 - \frac{\hat{R_t}}{\gbar N}\right) \mathbf{X_t}\beta - \Delta \hat{R_t}\right)^2 \\
\end{align}

\paragraph{Non-linear Curve Fitting} Learning is performed by fitting a curve over time as opposed to a linear model by treating the initial values $\Delta R_{\tau-J}, \dots, \Delta R_{\tau-1}$ as learnable parameters as well.

It is possible to derive reliability bounds on these heuristics as well, however, they are unlikely to be useful. We wish to identify a lower bound on reporting probability, therefore, if $\gbar_0 \leq \gbar^*$, then we have nothing to prove. Suppose, $\gbar_0 \geq \gbar^*$. Then we would like to show that $\gbar^* \leq (1-\delta)\gbar_0$, which follows the same derivation as Equation~\ref{eqn:delta}. Choosing an epsilon here is difficult -- Using the scheme as in Fixed Infection Rate algorithm leads to $\delta>1$. A valid choice is $\epsilon = 1 -\beta_0$ (obtained using $\beta^* \leq 1$), which would result in $\delta = N(1-\beta_0)\gbar_0./R_{max}$. This is often larger than 1 in practice, and thus not useful. 

Here, we have chosen $k=1$ as our Test1 is derived for scalar $\mathbf{\beta}$. However, the above algorithms can be used (without reliability tests) for any value of $k$ with Test2. In Section~\ref{sec:sim} we have explored the effectiveness of the above algorithms for $k>1$. 

\section{Experiments}

\begin{figure*}[!htpb]
  \centering
  \subcaptionbox{(10, 30)\label{fig:first-subfig}}{
    \includegraphics[width=0.33\textwidth]{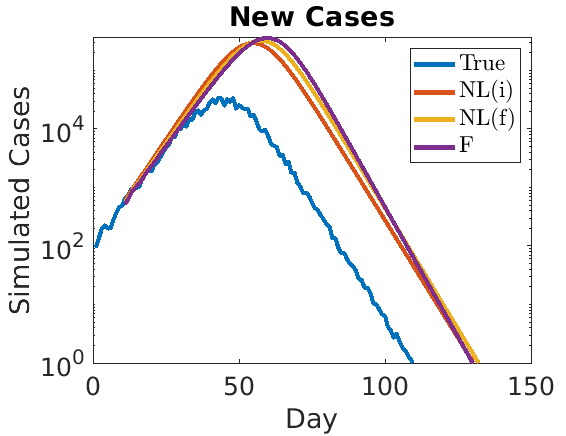}
  }
  \subcaptionbox{(30, 50)\label{fig:second-subfig}}{%
    \includegraphics[width=0.33\textwidth]{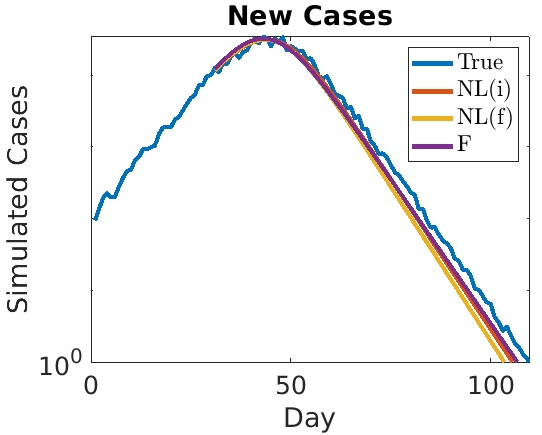}
  }
  
  \subcaptionbox{(50, 70)\label{fig:second-subfig}}{%
    \includegraphics[width=0.33\textwidth]{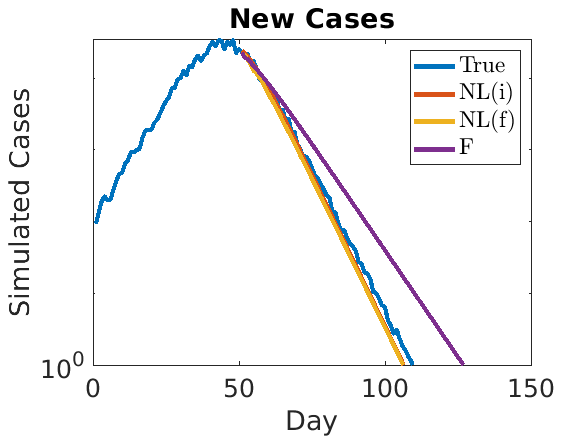}
  }
  \subcaptionbox{(70, 90)\label{second-subfig}}{%
    \includegraphics[width=0.33\textwidth]{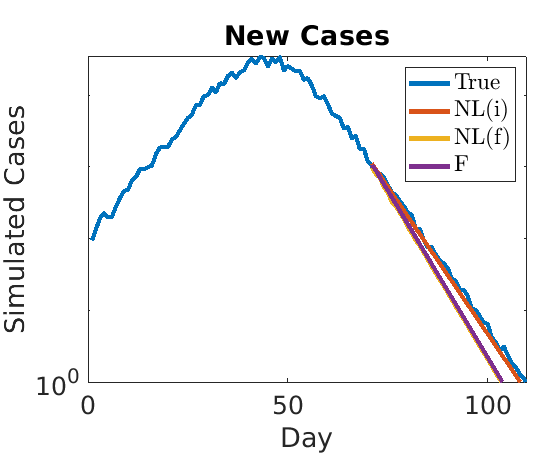}
  }
  \caption{Fitting the models over various intervals in the simulated data.}
  \label{fig:sim_sensitivity}
\end{figure*}

\subsection{Setup}
We obtained all the reported cases fom JHU CSSE COVID19 dataset~\cite{JHUdata}. Particularly we extracted county level data for New York City and Los Angeles. These were used because these two counties have performed serology tests with initial estimation of number of unreported cases. We further performed experiments on all US states, most of which did not pass our tests for reliability. Here we will report the results on New York, Illinois, Massachusetts, and New Jersey - four of the states with the most reported cases. Population of the counties and states were obtained from the US Census Bureau~\cite{USpopu}.

The county data showed significant amount of noise, and so, it was smoothed with moving average over two weeks, before applying our learning algorithms. The state-level timeseries were relatively less noisy, and were smoothed with moving average over one week. All the code was written in MATLAB and is available online\footnote{\url{https://github.com/scc-usc/ReCOVER-COVID-19}}. We set $k=1$ and $J=7$ for the US counties and states. The choice for $J$ was driven by observed weekly periodicity in the data~\cite{ricon2020seven}.

\begin{table*}[!htbp]
    \centering
       \caption{Learned parameters $(\beta_1, \beta_2)$, $\gbar$ from simulated experiments. The true value of $\gbar = 0.1$.}
    \begin{tabular}{|c|c|c|c|}
    \hline
    $(\tau_1, \tau_2)$ & NL(i) & NL(f) & F \\
    \hline
         (10, 30) &	(0.1723, 0.3619), 1 &	(0.3487, 0.02453), 1 &	(0.5569, 0.1071), 1 \\
(30, 50)&	(0.4408, 0.1793, \textbf{0.934} &	(0.4408, 0.1793), \textbf{0.092} &	(0.1750, 0.3620), \textbf{0.1095} \\
(50, 70)&	(0.2064, 0.4099), \textbf{0.1036} &	(0.5153, 0.0991), \textbf{0.0916} &	(0, 0.3652), 0.2440 \\
(70, 90)&	(0.0898, 0.0853), 1&	(0.1246, 0), 0.7796 &	(0, 0.3438), 0.1956 \\
\hline
    \end{tabular}
     \label{tab:sim}
\end{table*}

\subsection{Simulation}\label{sec:sim}
To demonstrate the effectiveness and limitations of the three approaches, we used the same setting as in Section~\ref{sec:leanrability} but with added noise to simulate an epidemic. We attempted to retrieve the parameters $(\beta, \gbar)$ using our three learning approaches - Non-linear Incremental Learning NL(i), Non-linear Curve Fitting NL(f), and Fixed Infection Rate Learning (F). These methods learn the models fitted on data for $T \in (\tau_1, \tau_2)$ for various intervals. Here, Fixed Infection Rate approach is simplified to use use $(1, \tau_1)$ to first identify $\beta$, and $(\tau_1, \tau_2)$ to identify $\gbar$, without any reliability guarantee. Note that here, $k=2$ and our reliability analysis applies only to $k=1$. Regardless, we perform these experiments to observe the effect of $k>1$.
Figure~\ref{fig:sim_sensitivity} shows the fit along with forecasts until the end of the epidemic. Observe that for the interval $(30,50)$ all methods are able to accurately forecast. 'NL(i)' and 'NL(f)' are able to accurately forecast for the interval $(50, 70)$. It also seems that the three methods accurately forecast by learning on the interval $(70, 90)$. To assess whether these fits actually retrieve the values of $\gbar$, we present the learned parameters in Table~\ref{tab:sim}. Note that only for the interval $(30, 50)$ all three methods are able to identify $\gbar$ close to its original value, i.e., $0.1$. While we were able to see accurate forecasts for the interval $(70, 90)$, the learned values of $\gbar$ are far from the true value. This reinforces our claim that there exists a certain window of data which is needed to accurately learn $\gbar$.

\begin{figure*}[htpb]
\captionsetup[subfigure]{labelformat=empty}
  \centering
  \subcaptionbox{}{
    \includegraphics[width=0.4\textwidth]{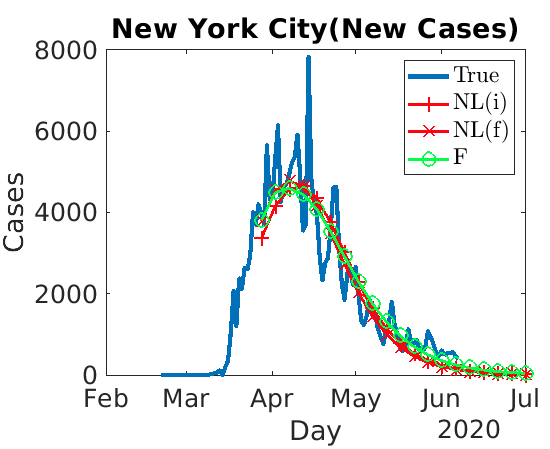}
  }
  \subcaptionbox{}{%
    \includegraphics[width=0.4\textwidth]{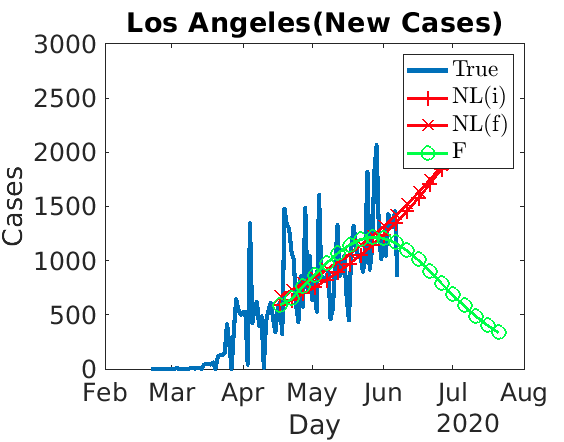}
  }
  \caption{Model fittings for counties using our three algorithms.}
  \label{fig:counties}
\end{figure*}

\begin{table*}[!htbp]
    \centering
    \caption{Estimated upper bound on number of total cases as a factor of reported cases for the counties.}
    \begin{tabular}{|l|c|c|c|c|}
    \hline
    States & NL(i) & NL(j) & F \\
    \hline
New York City & 41.2712 - 44.2499 - 47.6919 & 40.2654 - 42.8327 - 45.7496 & 38.8609 - 39.7612 - 62.9513\\
Los Angeles &  OOR &	OOR & (x)Test1 \\
\hline

    \end{tabular}
    \label{tab:counties}
\end{table*}

\subsection{Results: US Counties}
Figure~\ref{fig:counties} shows the model fit obtained on New York City and Los Angeles. Recall that $\gbar = (1-\rho)\gamma$, where $\gamma$ is the probability of reporting an infected case. Therefore, $1/\gbar$ forms the upper bound on the estimated number of total cases as a factor of reported cases.
We report these upper bounds in Table~\ref{tab:counties}. We have shown the factors obtained using $95\%$ confidence interval on $\gbar$. Additionally, for Fixed Infection Rate learning, we have provided an additional bound obtained from Theorem~\ref{thm:upperT} with $\delta$ obtained from Equation~\ref{eqn:final_delta}. 
The three methods result in factors close to each other (39-44) for New York City. However, the reliable bound obtained was $63$.
Figure~\ref{fig:counties} suggests that all three methods produce good fit for New York City. Note that this factor provides an upper bound on the actual ratio of total to reported cases. As an illustration, if we agree that the bound obtained for NYC is $40$ and $\rho = 0.5$, i.e., half of the population was able to completely isolate itself reducing its probability of infection to zero, then the the number of true cases will $0.5\times 36$, i.e, $18$ times of the reported cases.
On the other hand, none of the results for Los Angeles were sensible (see Table~\ref{tab:counties}). `OOR' indicates that the $95\%$ confidence interval was out of the feasible range of the solution. For the method `F', Test1 failed. It implies that it may be too early to reliably estimate the upper bound of this factor from Los Angeles data.

Note that antibody tests in New York in April estimated that $24.7\%$ of the entire population were infected\footnote{\url{https://www.livescience.com/covid-antibody-test-results-new-york-test.html}}. Based on the population of New York City and the number of reported cases at the time, this translates to actual cases being roughly $13.8$ times the reported cases.

\begin{figure*}[htpb]
  \centering
  \captionsetup[subfigure]{labelformat=empty}
  \subcaptionbox{}{
    \includegraphics[width=0.4\textwidth]{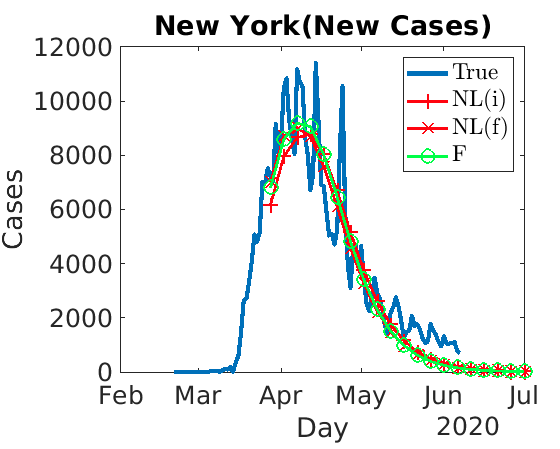}
  }
  \subcaptionbox{}{
    \includegraphics[width=0.4\textwidth]{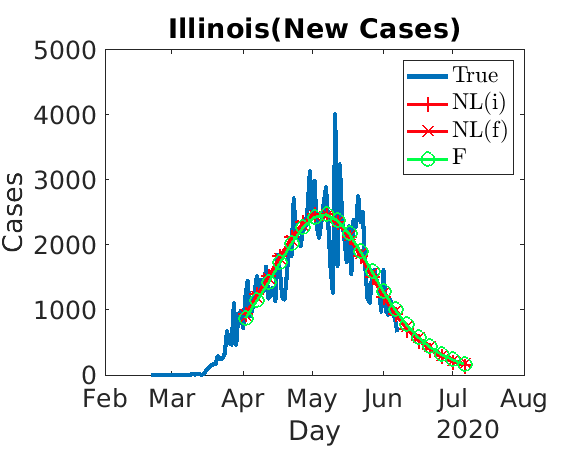}
  }
  \subcaptionbox{}{
    \includegraphics[width=0.4\textwidth]{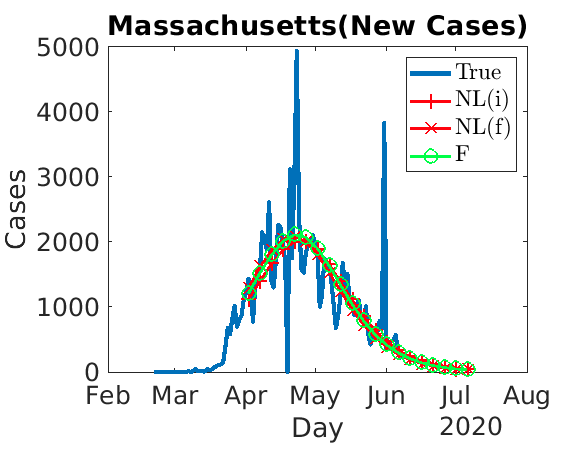}
  }
  \subcaptionbox{}{
    \includegraphics[width=0.4\textwidth]{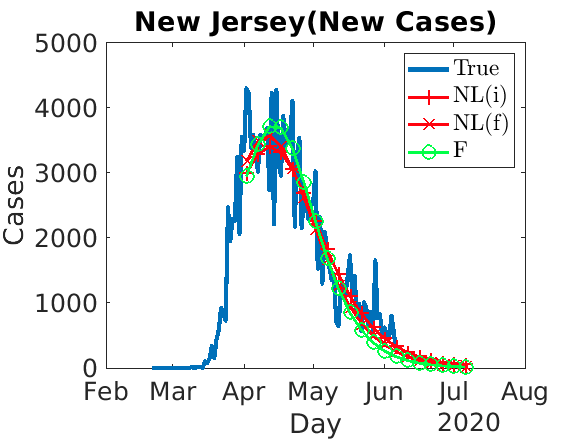}
  }
   \caption{Model fittings for states using our three algorithms.}
  \label{fig:states}
\end{figure*}

\begin{table*}[!htbp]
    \centering
    \caption{Estimated upper bound on number of total cases as a factor of reported cases.}
    \begin{tabular}{|l|c|c|c|c|}
    \hline
    States & NL(i) & NL(j) & F \\
    \hline
New York & 22.2565 - \textbf{24.8175} - 28.0445 & 21.1909 - \textbf{23.0848} - 25.3503 & 22.7918 - \textbf{23.4891} - 35.1732\\
Illinois & 30.3602 -\textbf{33.4322} - 37.1959 & 31.9536 - \textbf{33.4813} - 35.1624 & 36.2494 - \textbf{36.7681 }- 40.8638\\
Massachusetts& 24.1917 - \textbf{27.3885} - 31.5589 & 25.4437 - \textbf{27.5206} - 29.9668 & 30.3027 - \textbf{31.4906} - 38.2838\\
New Jersey & 18.5704 - \textbf{20.3788} - 22.5774 & 19.0749 - \textbf{20.3332} - 21.7692 & 18.6939 - \textbf{19.0698} - 29.2222\\
 \hline
    \end{tabular}
    \label{tab:states}
\end{table*}

\subsection{Results: US States}
We also estimated the bound on the total number of actual cases as a factor of reported cases for various states. Table~\ref{tab:states} shows the results for New York, Illinois, Massachusetts, and New Jersey. All the states presented here, passed Test1 and Test2. 
Figure~\ref{fig:states} shows the model fit obtained using the learned parameters.

For New York our methods estimated that the bound on total cases is 23-25 times of the reported cases with the reliable worst case bound being 35.17. Note that the state-wide antibodies study in early May  estimated that 12.3\% of the state population was infected\footnote{\url{https://www.governor.ny.gov/news/amid-ongoing-covid-19-pandemic-governor-cuomo-announces-results-completed-antibody-testing}}. This translates to actual cases being roughly 7.6 times the reported cases.
For Illinois, Massachusetts and New Jersey, this factor is roughly 33-37, 27-31, and 19-21, with worst case upper bound being 40.86, 38.28, and 29.22, respectively. If we assume that these states are similar enough that they have the same probability $\gamma$ of reporting and the same fraction of population that is completely isolated, then we can conclude that for all these states, the true cases cannot be more than 29.22 times, which satisfies all the upper bounds. All four states passed Test2 (peak test) with $\delta_3 = 0.2$.

We have not presented results for the US at country-level due to high heterogeneity in the infection trends of various states. Therefore, learning a single parameter for the entire country may not be accurate, and it may be better to learn separately for different states.

\section{Conclusions}
We have proposed Fixed Infection Rate algorithm to reliably estimate a bound on number of unreported cases. The algorithm is built upon key theorems that identify limitations of learnability of reporting probability. We have also proposed two heuristics that learn this bound but do not provide guarantees. We demonstrate through simulated experiments that all three methods are able to identify the bound correctly on certain regions of the epidemic.
We emphasize that these algorithms learn $\gbar$ which combines the effect of reporting probability and isolated population. Particularly, if a fraction $\rho$ of the total population completely isolates itself, thus getting removed from the epidemic, then $\gbar = (1-\rho)\gamma$, where $\gamma$ is the probability of reporting a case (symptomatic or asymptomatic). Hence, $\gbar$ forms the lower limit for reporting probability. In other words we can find an upper bound on total number of infected cases. Applying our algorithm on the data during the social distancing phase, we conclude with high confidence that the actual number of cases cannot be more than 35 times in New York, 40 times in Illinois, 38 times in Massachusetts, and 29 times in New Jersey, than the reported cases.
In future work, we will explore obtaining tighter bounds, when the precautions are relaxed and the fraction of isolated population $\rho$ is reduced. We will further explore how to utilize data across changing dynamics due to changing policies to strengthen these bounds.

\begin{acks}
This work was supported by National Science Foundation Award No. 2027007.
\end{acks}

\bibliographystyle{ACM-Reference-Format}
\bibliography{sample-base}


\end{document}